\documentclass[article, aps prl,superscriptaddress,longbibliography,groupedaddress]{revtex4}  % for review and submission
\usepackage{graphicx}  % needed for figures
\usepackage{dcolumn}   % needed for some tables
\usepackage{bm}        % for math
\usepackage{amssymb}   % for math
\usepackage{subfigure}
\usepackage{amsthm}
\usepackage{amsmath}

% avoids incorrect hyphenation, added Nov/08 by SSR
\hyphenation{ALPGEN}
\hyphenation{EVTGEN}
\hyphenation{PYTHIA}

\def\setof#1{\left\{{#1}\right\}}

\newcommand{\R}{{\mathbb{R}}}
\newcommand{\N}{{\mathbb{N}}}

\newcommand{\CDF}{{\mbox{CDF}}}

\newtheorem{theorem}{Theorem}
\newtheorem{lemma}{Lemma}

\begin{document}

% The following information is for internal review, please remove them for submission

\title{Quantitative Measure of Memory Loss in Complex Spatio-Temporal Systems  }

\author{ Miroslav Kram\'ar}
\affiliation{
INRIA Saclay, 1 Rue Honor d'Estienne d'Orves, 91120 Palaiseau, France
}
\author{Lenka Kovalcinova}
\affiliation{Department of Mathematical Sciences,
New Jersey Institute of Technology,
University Heights,
Newark, NJ 07102}
\author{Konstantin Mischaikow}
\affiliation{Department of Mathematics and BioMaPS Institute,
Hill Center-Busch Campus,
Rutgers University,
110 Frelinghusen Rd,
Piscataway, NJ  08854-8019, USA}
\author{Lou Kondic}
\affiliation{Department of Mathematical Sciences,
New Jersey Institute of Technology,
University Heights,
Newark, NJ 07102}

\date{\today}

\begin{abstract}
To make progress in understanding the issue of memory loss and history dependence in  evolving complex systems, we consider the mixing rate that specifies how fast the future states become independent of the initial condition.    We propose a simple measure for assessing the mixing rate  that can be directly applied to  experimental data observed in any metric space $X$. For a compact phase space $X \subset R^M$,  we prove the following statement.  If  the underlying dynamical system has a unique physical measure and its dynamics is strongly mixing with respect to this measure, then our method provides an upper bound of the mixing rate.   We employ our method to analyze memory loss for the system of slowly sheared granular particles with a small inertial number $I$. The shear is induced by the moving  walls as well as by the linear motion of the support surface that ensures approximately linear shear throughout the sample.   We show that even if $I$ is kept fixed, the rate of memory loss (considered at the time scale given by the inverse shear rate)  depends erratically on the shear rate. Our study suggests a presence of bifurcations at which the rate of memory loss increases with the shear rate while it decreases away from these points.   We also find that the memory loss is not a smooth process. Its rate is closely related to frequency of the sudden transitions of the force network.   The loss of memory, quantified by observing evolution of force networks, is found to be   correlated  with the loss of correlation of shear stress measured on the system scale. Thus, we have  established a direct link between the evolution of force networks and macroscopic properties of the considered system.  
\end{abstract}

%\pacs{}
\maketitle

\section{Introduction.}
Understanding the global dynamics of nonlinear systems is typically challenging, especially if the governing equations are not known and only experimental data are available. 
A significant challenge arises from chaotic behavior as this makes precise long term prediction of future states impossible.
While low dimensional chaotic dynamics is relatively well understood via the geometric theory of differential equations \cite{guckenheimer2013nonlinear, robinson1998dynamical}, higher dimensional systems are usually treated using the ergodic theory \cite{walters2000introduction, petersen1989ergodic}.

In this Letter we propose a quantitative measure to assess the rate at which an ergodic system looses its memory, i.e.\ how fast the future states become independent of the initial condition.  The main idea is to analyze distributions of distances between  consecutive states of the system sampled at different sampling rates.   We stress that knowledge of the governing equations of the system is not required and the method can be applied to experimental data. However, to obtain a rigorous upper bound of the mixing rate for the underlying dynamical system $f\colon X\to X$ we have to require that: (i)  $X \subset R^M$ is compact, (ii) the system has a unique physical invariant measure $\mu$ and is strongly mixing with respect to this measure, i.e. $\mu(f^{-m}(A)\cap B)$ converges to $\mu(A)\mu(B)$ as $m\to \infty$ for any two measurable sets $A,B\subset X$.  There exist a few slightly different definitions of  the mixing rate. We consider it to be the rate at which the strong mixing coefficient $\alpha$ converges to zero~ \protect \cite{bradley2005basic}.   Our result provides an upper bound on  convergence rate of $\alpha$  and we employ it to quantify the memory loss of the system. 

First we familiarize the reader with our method by considering discrete dynamical systems generated by the tent map and the logistic map. However,  our main motivation comes from the dynamics of a large number of interacting particles modeling granular systems. Granular systems have attracted a lot of attention over the last centuries because of their importance to everyday life. However, many of their properties remain obscure even today. In particular, any description of a granular system in terms of particles' positions and moments is necessarily incomplete, since it sheds limited light on the properties of the particle interactions; even for the simplest case of a static system of (frictional) granular particles, the interaction field is not uniquely determined by the particle positions. Both physical experiments and simulations have shown that the interaction field consists of complex interaction networks that are known to be crucial for understanding mechanical properties of the system~\cite{networks_review_18}.   Properties of these networks can be well described using tools of algebraic topology \cite{dijksman2018characterizing}.  In particular, persistence diagrams \cite{carlsson, edelsbrunner-harer} provide quantitative succinct  descriptions of the changes in topology of an interaction network as the force level changes from infinity to zero.  For a detailed treatment of the subject we refer the reader to~\cite{physicaD14} while a more compact presentation can be found in~\cite{pre13}.  The space of persistence diagrams is a metric space and one can study the dynamics  of interaction networks~\cite{pre14, pg17} in this space.  

%LK describe better the regime considered - dense flows, etc.
This Letter presents an application of our measure of the rate of memory loss  to a simple shear flow of
granular particles in the regime where inertial effects are strong, and particles are stiff. We show that dependence of the rate of memory loss on the parameters of the system is nonlinear and not at all obvious. 
Our study shows that the global dynamics of the system changes erratically with the control parameter suggesting a presence of bifurcations even if the inertial  number, $I$, ~\cite{cruz_2005} is kept fixed. 
We also show that the rate of memory loss is closely connected to the frequency of abrupt changes in the persistence diagrams of the interaction networks, triggered by sudden reorganization of the force network. Moreover, we show that these changes are correlated to system-wide average measures such as shear stress. We document this by demonstrating a relation between decay of autocorrelation of shear stress and our measure of the memory loss.

\begin{figure}
\includegraphics{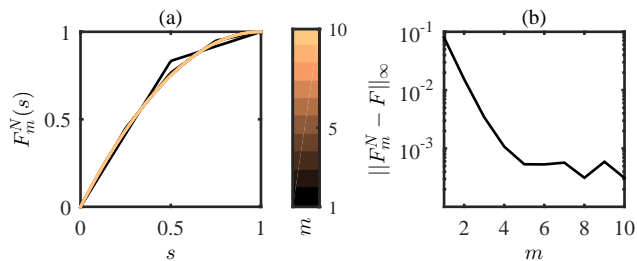}
\caption{(a) $\CDF$s $F^N_m$ of $\setof{d(x_n, x_{n+m})}_{n=0}^{N-m}$ for the orbit of the ten map starting at $x=0.1$ and $N = 5 \times 10^6$.  (b) Rapid decay of $|| F^N_m - F ||_\infty$ shows that   $F^N_m$ gets close to  $F$  as $m$ increases.} 
\label{fig::TentFunction}
\end{figure}

\section{ Mixing rate.} We propose that the mixing rate can be inferred from the convergence rate of the  cumulative distribution functions (CDFs),  $F_\tau$, of the distances  $d(x(t),x(t+\tau))$ along a  trajectory $x$. 
To provide motivation and intuition for our result we consider the discrete system generated by the tent map $f(x) := 2\min \setof{x, 1-x}$  on the interval $[0,1]$.  
The physical invariant measure $\mu$ of this system coincides with the Lebesgue measure on $[0,1]$. 
We begin by  computing  a finite sample $\setof{x_i}_{i = 0}^N$, $N = 5\times 10^6$,  of a trajectory starting from  $x_0 =0.1$, see the Section~\ref{sec:tent_map} for more details. 
From this sample, we calculate the CDFs, $F_m^N$, of the distances  $\setof{d(x_n, x_{n+m})}_0^{N-m}$ for different values of $m$.  
Figure~\ref{fig::TentFunction}(a) suggests that  $F_m^N$  converge as $m \to \infty$.  
In Section~\ref{sec:strongly_mixing_systems} we prove that for the CDFs $F_m$ of the random variable $d(x,f^m(x))$, with  $x$  distributed according to  $\mu$, there exits $C>0$  such that 
\begin{equation}
\label{eqn::Tent_convergence}
|| F_m - F ||_\infty := \sup_{s \in \R}| F_m(s) - F(s) | < C2^{-m},
\end{equation}
where $F$ is the CDF of  the distance $d(x,y)$ between two random variables  $x$ and $y$ which are i.i.d. according to $\mu$. Figure~\ref{fig::TentFunction}(b) shows that  $|| F^N_m - F ||_\infty$ exhibits the decay  predicted by  (\ref{eqn::Tent_convergence})  only for $m \leq 5$. For $m > 5$ a larger sample is needed to  properly approximate  'spatial averages'  $F_m$ by  'time averages' $F_m^N$.

It might seem that the decay given by~(\ref{eqn::Tent_convergence}) is closely connected to the Lyapunov exponent of the system. The tent map locally stretches the space by factor of two and its Lyapunov exponent is $\ln2$. However, the following theorem shows that the convergence rate reflects the mixing rate, which has a more global nature than the Lyapunov exponent. 

\begin{theorem}
\label{thm::convergence_rate}
 Let $X\subset \R^M$ be compact and suppose that $f\colon X\to X$ has a unique invariant measure, $\mu$, whose Radon-Nikodym derivative is continuous  with respect to the Lebesgue measure on $\R^M$.  Let $\setof{\varepsilon_m}_{m=1}^{\infty}$ be a sequence of positive numbers converging to zero. If there exits a sequence of partitions $\setof{\mathcal{T}_m}_{m=1}^{\infty}$ of $X$ such that diameter of every set $T \in \mathcal{T}_m$ is less than $\varepsilon_m$ and for every measurable set $E \subset X$
\begin{equation}
\label{eqn::mixing}
|\mu(E)\mu(T) - \mu(f^{-m}(E) \cap T) | <  \varepsilon_m \mu(T),
\end{equation}
then there exits a constant $C > 0$ such that
\begin{equation}
\label{eqn::convergence}
|| F_m - F ||_\infty < C\varepsilon_m.
\end{equation}
\end{theorem}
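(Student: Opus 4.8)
The plan is to realize both distribution functions as spatial averages of ball measures and then compare them cell-by-cell over the partitions $\mathcal{T}_m$. Writing $B(x,r):=\setof{y\in X: d(x,y)\le r}$ for the closed ball, unfolding the definitions gives $F(s)=\int_X \mu(B(x,s))\,\du\mu(x)$ and $F_m(s)=\mu\setof{x\in X: d(x,f^m(x))\le s}$. Since $\mathcal{T}_m$ partitions $X$ (and may be taken finite by compactness), both can be split as sums over $T\in\mathcal{T}_m$, so it suffices to bound, for each cell, the discrepancy between $\mu\setof{x\in T: d(x,f^m(x))\le s}$ and $\int_T\mu(B(x,s))\,\du\mu(x)$, and then to reassemble using $\sum_{T}\mu(T)=1$.

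I would fix a representative $t_T\in T$. Because $\mathrm{diam}(T)<\varepsilon_m$, the triangle inequality traps the event $\setof{x\in T:d(x,f^m(x))\le s}$ between $\setof{x\in T:f^m(x)\in B(t_T,s-\varepsilon_m)}$ and $\setof{x\in T:f^m(x)\in B(t_T,s+\varepsilon_m)}$; taking measures, $\mu\setof{x\in T:d(x,f^m(x))\le s}$ lies between $\mu(f^{-m}(B(t_T,s-\varepsilon_m))\cap T)$ and $\mu(f^{-m}(B(t_T,s+\varepsilon_m))\cap T)$. Now I would invoke the decorrelation hypothesis~(\ref{eqn::mixing}) with $E=B(t_T,s\pm\varepsilon_m)$ to replace each of these by $\mu(B(t_T,s\pm\varepsilon_m))\,\mu(T)$ up to an error below $\varepsilon_m\mu(T)$. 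The same bound on $\mathrm{diam}(T)$ gives $\mu(B(t_T,s-\varepsilon_m))\le\mu(B(x,s))\le\mu(B(t_T,s+\varepsilon_m))$ for every $x\in T$, so $\int_T\mu(B(x,s))\,\du\mu(x)$ is likewise pinned between $\mu(B(t_T,s-\varepsilon_m))\mu(T)$ and $\mu(B(t_T,s+\varepsilon_m))\mu(T)$.

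Comparing the two quantities, the per-cell discrepancy is bounded by the mixing error $\varepsilon_m\mu(T)$ plus the annular term $\bigl[\mu(B(t_T,s+\varepsilon_m))-\mu(B(t_T,s-\varepsilon_m))\bigr]\mu(T)$. Summing over $T$ and using $\sum_T\mu(T)=1$ gives the uniform-in-$s$ estimate $\|F_m-F\|_\infty\le \varepsilon_m+\sup_{t\in X,\,r\ge0}\mu\setof{y\in X:|d(t,y)-r|\le\varepsilon_m}$, reducing everything to the measure of an $\varepsilon_m$-thick spherical shell.

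The main obstacle is precisely this shell term, and it is the only place the regularity hypothesis is used. I would argue that thickening a ball by $\varepsilon_m$ changes its $\mu$-measure by $O(\varepsilon_m)$ uniformly: since the Radon--Nikodym derivative $\rho=\du\mu/\du\mathrm{Leb}$ is continuous on the compact set $X$, it is bounded, so the measure of the shell $\setof{y:|d(t,y)-r|\le\varepsilon_m}$ is at most $\|\rho\|_\infty$ times its Euclidean volume. As all relevant radii are at most $\mathrm{diam}(X)$, that volume is bounded by a constant depending only on $M$ and $\mathrm{diam}(X)$ times $\varepsilon_m$. Collecting the mixing and shell contributions into a single constant $C>0$ then yields $|F_m(s)-F(s)|<C\varepsilon_m$ for every $s$, which is~(\ref{eqn::convergence}). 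I would note that continuity of $\rho$ is invoked only through boundedness, so it is comfortably sufficient for the argument.
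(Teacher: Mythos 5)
Your proof is correct and follows essentially the same route as the paper's: a cell-by-cell comparison over $\mathcal{T}_m$ with a fixed representative point, the mixing hypothesis applied to balls centered at that point, and an $O(\varepsilon_m)$ bound on the measure of spherical shells obtained from boundedness of the continuous density on the compact $X$ (the paper's Lemma~\ref{lem::annulus_measure}). The paper merely packages the shell estimate as a standalone lemma and organizes the per-cell comparison as a three-term triangle inequality through the intermediate quantity $\mu(B(x_T,s))\mu(T)$ rather than your two-sided sandwich, so apart from trivial constant bookkeeping the two arguments coincide.
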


The relevant consequence of this theorem is that the convergence rate of $F_m$ provides an upper bound on how fast the sets with diameter $\varepsilon_m$ are mixed.  
As the system is mixed the trajectories starting from similar initial conditions become independent. Thus our ability to forecast the future states decreases, and  we can interpret the mixing rate as a rate of memory loss of the system. 
For further intuition the reader is referred to Section~\ref{sec:logistic_map} where a  study of the logistic equation is presented.

\section{ Continuous dynamics.} 
Our model for the granular dynamics takes the form of a differential equation $\dot{x}=h(x)$ where $x \in \R^M$.
For a short time scale $\tau$ the dynamics  tends to be well approximated by the linearization of the system and  $d(x(t), x(t + \tau)) \approx \tau ||h(x(t))||_2 $ where $|| \cdot ||_2$ is  Euclidian norm in $\R^M$.  
To detect the longest time scale for which the system is well approximated by its linearization, let us  consider the  CDFs  $F_m$ of the distance $d(x(t),x(t+m\tau))$ along an orbit $x$. If the dynamics of $\dot{x}=h(x)$ is well approximated by its linearization  at the time  scale $m\tau$, then $d(x(t),x(t+m\tau)) \approx  m d(x(t),x(t+\tau))$ and  $F_m(s) \approx F_1(ms)$. This scaling is lost at the time scale at which the non-linear effects become important. Moreover, if the functions $F_m$ are essentially constant for $m > m_0$, then it is likely that the system has a bounded attractor and its subsets with small volume are well mixed at the time scale $m_0\tau$.

\begin{figure}
\includegraphics{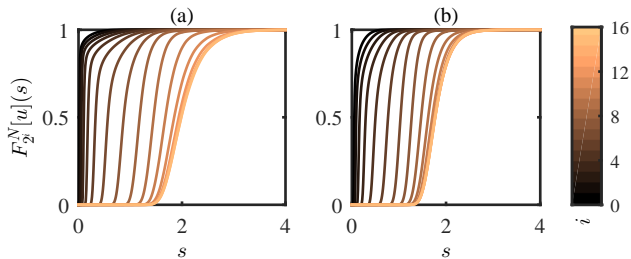}
\caption{  $F^N_{2^i}[u]$ for the system (a) $S_1$ and (b) $S_5$. The value of $i$ is indicated by the color bar. }
\label{fig::Dist}
\end{figure}

\section{ Linearly sheared systems.}  Now we consider simulations of sheared granular particles.   The details of the 
simulation protocol are given in Section~\ref{sec:simulations}; here we provide an overview.  
In order to avoid as much as possible the complications involved in inhomogeneous flows, but also governed by the goal to 
consider experimentally realizable configurations, we consider the following setup. 
Two-dimensional frictional bidisperse circular particles (elastic disks) are placed between solid walls that impose shear flow (shear
rate $\dot \gamma$) by moving to the right (top) and
left (bottom) with the same speed, $v$, (all the relevant quantities in what follows are 
non-dimensionalized using average particle diameter, mass, and binary
collision time, $\tau_c$, as the length, mass, and time scale, respectively, see Section~\ref{sec:simulations}.
To ensure uniform flow, the particles are placed on a solid substrate that moves with 
a linear velocity profile, similarly as in recent experiments~\cite{dapeng}.   
The walls are subject to the applied pressure, $P$, chosen in such a way that the interaction forces between the particles are orders
of magnitude stronger than the particle/substrate forces.   In the present work, we vary both  the shear rate and applied pressure while keeping 
the inertial number $I = \dot \gamma \sqrt{m/P}$ constant.   The question is whether we 
can understand the memory loss and interaction network evolution in this simple setup.  

\begin{figure}
\includegraphics{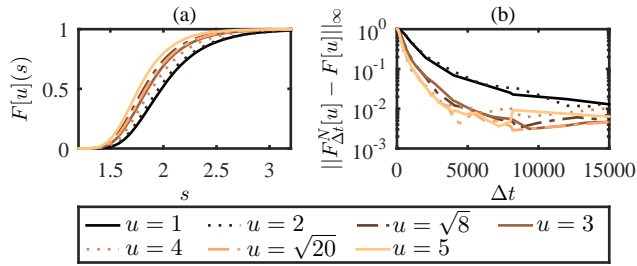}
\caption{(a) Estimated limiting distributions $F[u]$, given by $F^N_{2^{15}}[u]$, for different systems.  (b) Value of  $|| F^N_{\Delta t}[u] - F[u]||_\infty$ as a function of $\Delta t$, where $\Delta t = 1$ corresponds to $2\tau_c$.}
\label{fig::lim}
\end{figure}

There are three relevant time scales in the setup considered: $t_s ={1/{\dot \gamma}}$ (shear time), $t_I= \sqrt{m/P}$ (inertial time) 
and $t_c = \sqrt{m/k_n}$ (binary collision time, also comparable to the time needed for a signal to propagate through a grain).  
One could also think of the typical contact time between the particles as a relevant time scale, however, for a simple shear flow considered here, this time is comparable to $t_I$.  Out of these three time scales, one can produce two independent 
parameters; one possibility is to define $I = {t_I/t_s}$ and $\kappa = {t_I/t_c}$.    Since $t_c \ll t_I \ll t_s  $, $I \ll 1$ and $\kappa \gg 1$
(for the reference case, $I \approx 10^{-4}$ and $\kappa \approx 10^{5}$, see Section~\ref{sec:simulations}.   For such large values of $\kappa$, it has been 
argued that the main features of the flow are $\kappa$-independent, and that the effect of finite elastic modulus of the particles 
can be ignored~\cite{cruz_2005}.   Note that two simulations with different values of $P$ and $\dot \gamma, $ but with $I$ and 
$\kappa$ the same are identical after rescaling the time with  $t_s$.

The reference case of our simulations corresponds  to $P=1$ and  $v = v_{ref} \approx 2.5 \times 10^{-5}$. We refer to the system sheared with  $v = uv_{ref}$ by $S_u$.  For different values of $u$, we record the positions of the particles and forces acting between them at times  $t_i = 2 i\tau_c $, where $i = 0, \ldots, N = 5\times10^{6}$. By applying persistent homology to the force networks recorded at times $t_i$, we obtain a  sequence $\setof{x_i}_{i=0}^{N}$ that captures the evolution of the topological properties of the interaction network along the sampled trajectory.  To assess the differences between the persistence diagrams, we use the Wasserstein $d_{W^2}$ distance which mitigates the influence of noise~\cite{physicaD14}.

\section{ Memory loss of linearly sheared systems.} We use Theorem~\ref{thm::convergence_rate} to infer  the rate of memory loss for the systems $S_u$. Using the sample  $\setof{x^u_i}_{i=0}^{N}$ of the system $S_u$, we compute  the CDFs $F^N_{m}[u]$ of  $\setof{d_{W^2}(x^u_n,x^u_{n+m})}_{i = 0}^{N-m}$.  Figure~\ref{fig::Dist} suggests that $F^N_{m}[1]$ and $F^N_{m}[5]$ converge.  For every $S_u$ we verified that the functions $F^N_m[u]$  do not depend on the choice of initial conditions and  change very little if $m \geq 10^4$. Hence, we  approximate the limiting distributions $F[u]$ by  $F^N_{2^{15}}[u]$. As documented by Fig.~\ref{fig::lim}(a), the limiting distributions $F[u]$ vary with $u$ and there is no obvious trend. Thus, the asymptotic dynamics of the systems $S_u$ changes in a rather complicated manner.

To study the memory loss we  denote a CDF of  $\setof{d_{W^2}(x^u(t),x^u(t+{\Delta t}))}$ by  $F^N_{\Delta t}[u]$.  Figure~\ref{fig::lim}(b) shows  $|| F^N_{\Delta t}[u] - F[u]||_\infty$ as a function of $\Delta t$ for different systems $S_u$. It indicates that the rate of memory loss increases with $u$, but in a non-smooth fashion.   This suggests a presence of bifurcations along the parameter $u$. The rate of memory loss increases with $u$ as we pass over the consecutive bifurcation values while it stays essentially constant away from this values.

Based on the earlier discussion about expected minor influence of the parameter $\kappa$ measuring elasticity of particles, it is 
natural to analyze memory loss by considering the time scaled by $t_s$. To be specific, we rescale time by the wall speed in such a way that the wall moves by one particle diameter during one time unit. Using rescaled time $t_u = t  /(u v_{ref})$,  we consider $|| F^N_{\Delta t_u} [u]- F[u]||_\infty$ as functions of $\Delta t_u$.  They indicate that the decay rate of $|| F^N_{\Delta t_u} [u]- F[u]||_\infty$ varies erratically with $u$.

\begin{figure}
\includegraphics{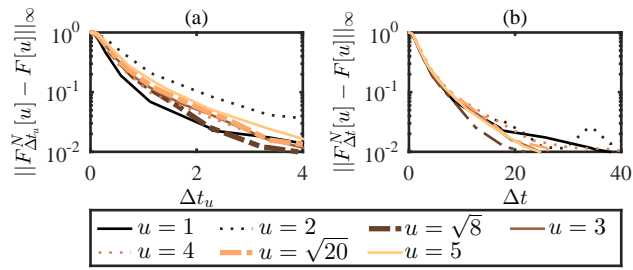}
\caption{(a) Value of  $|| F^N_{\Delta t_u} - F||_\infty$ as a function of $\Delta t_u$ for different systems. Time $t_u$ is scaled by the wall speed. (b) Value of  $|| F^N_{\Delta t^*_u} - F||_\infty$ as a function of $\Delta t^*_u$ for different systems. Time $t^*_u$ is scaled by the number of transitions.}
\label{fig::KS}
\end{figure}

This finding is surprising since, as mentioned earlier in the text, one would expect essentially identical behavior of the considered systems since $I$ is kept fixed, and the values of $\kappa$ are large.  To better understand this behavior, we recall that at the physical time scale the rate of memory loss does not change much with $u$ unless the parameter $u$ passes through an expected bifurcation value. At those values the rate increases with $u$, see Fig.~\ref{fig::lim}(b). On the other hand, Fig.~\ref{fig::KS}(a), shows that at the time scale $t_u$ the rate of memory loss decreases with $u$ between the consecutive bifurcations which is consistent with~\cite{cruz_2005}. However, the rate of memory loss still increases with $u$ as it crosses over a bifurcation value.

\begin{figure}
\includegraphics{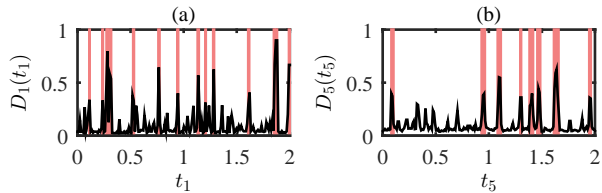}
\caption{ Values of  (a) $D_1(t_1)$ and (b) $D_5(t_5)$. The detected transitions are marked by red boxes.}
\label{fig::Transitions}
\end{figure}

Finally, we connect the unexpected behavior shown in Fig.~\ref{fig::KS}(a) directly to the 
evolution of interaction network.  The animations, indicate that this evolution exhibits slow-fast dynamics: slow dynamics is dominated by build up of force `chains' that buckle and lead to large and fast rearrangements.   We proceed to analyze these transitions and their possible connection to the memory loss.

\begin{figure}[t]
\includegraphics{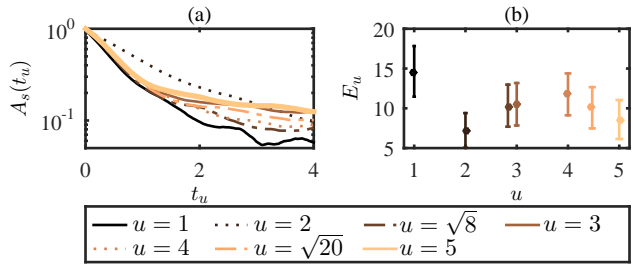}
\caption{(a)  Autocorrelation of the stress tensor. (b) Average number of transitions $E_u$ of the system $S_u$  during the time interval in which its wall moves by two particle diameters. The error bars indicate the standard deviation. }
\label{fig::DecayCompare}
\end{figure}

To quantify the transitions, we consider the evolution of the systems at the fastest timescale given by our sampling. Let  $D_u(t_u)$ be the $d_{W^2}$ distance between the state of the system $S_u$ sampled at time $t_u$ and the next sampled state. The peaks of $D_u$, visible in Fig.~\ref{fig::Transitions},  indicate presence of abrupt transitions of the system which otherwise exhibits a relatively slow evolution.  We define individual abrupt transitions as distinct peaks of $D_u$ that exceed a specified 
threshold; the number of detected transitions depends on the choice of threshold, however we find that 
the following results suggesting universal dependence of the memory loss on the number of abrupt transitions hold
for thresholds  in $ [0.2, 0.4]$. Note that if $\bar{D_u}$ is the mean of $D_u$ and  $\sigma_{D_u}$ is its standard deviation,  then  the interval  $[\bar{D_u} + \sigma_{D_u} , \bar{D_u} + 3\sigma_{D_u}]  \subset  [0.2, 0.4]$ for each $D_u$. We
use the threshold value of $0.3$.

Let $\delta t_{u}$ be an average time between two consecutive abrupt transitions of the system $S_u$ and consider the rescaled time  $ t^* =  t_u/\delta t_{u}$. Note that  $t^*$ does not depend explicitly on $u$, but only on the number  of transitions.  Figure~\ref{fig::KS}(b) shows that values of $||F^N_{\Delta t^*}[u] - F[u]||_\infty$ are very similar for all systems $S_u$.  This suggests that the rate of memory loss depends universally on the frequency of the transitions. Thus, we conjecture that the memory is predominately lost during the abrupt transitions caused by localized buckling of the interaction networks.

Now, we investigate influence of the abrupt transitions on global measures such as the stress tensor. Figure~\ref{fig::DecayCompare}(a) shows the temporal autocorrelation of the shear stress.  We see that the 
ordering of the autocorrelations for various values of $u$ is similar to the ordering of the curves $|| F^N_{\Delta t_u} - F||_\infty$, see Fig.~\ref{fig::KS}(a). Moreover, only a handful of transitions occur before the autocorrelation drops below $0.2$. Figure~\ref{fig::DecayCompare}(b) shows the  average number $E_u$ of transitions of the system $S_u$ during the time  that the wall moves by two particle diameters. Figures~\ref{fig::DecayCompare}(a)-(b) indicate that the systems with  large $E_u$ decorrelate faster. 

\section{ Conclusion.}
In summary, we formulated a method for estimating the rate of memory loss in complex spatio-temporal
systems. We employed it to study the time evolution of force networks of granular systems in the space of persistence diagrams. We showed that dependence of the dynamics on the control parameters is very complex even for the simple case of uniform flow in a planar geometry. The rate of the memory loss depends erratically on the stiffness of the particles even if the particles are stiff and the inertial number is fixed. While proper understanding of this dependence should be the subject of future research, we found that the rate of the memory loss is strongly correlated with the frequency of abrupt transitions of interaction network. The fact that even a simple planar flow of circular particles exhibits an extremely complex dependence on the control parameters suggests that significant new research is still needed in the field of dense granular matter. In particular, much more work will be needed to fully understand  the systems exposed to more complex flows or built from more complex particles.

{\bf Acknowledgements }
M.K. was supported by ERC Gudhi (ERC-2013-ADG-339025).  L.K. and L.K. were partially supported by NSF Grant No. DMS-1521717, DARPA contract HR0011-15-2-0033, and ARO Grant No.  W911NF1810184. 
K.M. was supported by NSF-DMS-1248071, DMS-1521771, DMS-1622401, DMS-1839294 and DARPA contracts HR0011-16-2-0033 and FA8750-17-C-0054.

\section{Appendix}

\subsection{Particle interactions and simulation protocol}
\label{sec:simulations}
The particles in the considered numerical system are modeled as 2D soft frictional inelastic disks that interact via normal and tangential forces, specified
here in nondimensional form. We use the average particle diameter, {\bf $d_{ave}$}, as
the lengthscale, the binary particle collision time  $\tau_c = 2\pi
\sqrt{d_{ave}/(2 g k_n)}$ as the time scale, and the average particle mass,
$m$, as the mass scale.  Force constant, $k_n$ (in units
of ${ m g/d_{ave}}$), is set to a value
corresponding to photoelastic disks~\cite{geng_physicad03}.  The
parameters entering the linear force model can be connected to
physical properties (Young modulus, Poisson ratio) as described
 e.g. in \cite{kondic_99}.

Dimensionless normal force between $i$--th and $j$--th particle is
\begin{equation}
\boldsymbol F_{i,j}^{n}=k_n x_{i,j}\boldsymbol n-\eta_n\overline m\boldsymbol v_{i,j}^{n}\,
\end{equation}
where $\boldsymbol v_{i,j}^{n}$ is the relative normal velocity, $\overline{m}$ is reduced mass, $x_{i,j} = d_{ave} - r_{i,j}$ is the amount of compression, with $d_{ave} = {(d_i + d_j)/2}$ and $d_i$, $d_j$ diameters of the particles $i$ and $j$. The distance of the centers of $i$--th and $j$--th particle is denoted as $r_{i,j}$. Parameter $\eta_n$ is the damping coefficient in the normal direction, related to the coefficient of restitution set to $e=0.5$.

We implement the Cundall--Strack model for static friction \cite{cundall79}. The tangential spring $\boldsymbol\xi$ is introduced between particles for each new contact that forms at time $T = T_0$ and is used to determine the tangential force during the contact of particles. Due to the relative motion of particles, the spring length $\xi$ evolves as $\xi=\int_{T_0}^T\boldsymbol v_{i,j}^{t}(t)dt$ with $\boldsymbol v_{i,j}^{t}=\boldsymbol v_{i,j}-\boldsymbol v_{i,j}^{n}$ and $\boldsymbol v_{i,j}$ being the relative velocity of particles $i,~j$. The tangential direction is defined as $\boldsymbol{t}={\boldsymbol v_{i,j}^{t}/|\boldsymbol v_{i,j}^{t}|}$. The direction of $\boldsymbol\xi$ evolves over time and we thus correct the tangential spring as   $\boldsymbol{\xi}^{}=\boldsymbol{\xi}-\boldsymbol{n}(\boldsymbol{n.\xi})$. The tangential force is set to
\begin{equation}
\boldsymbol F^{t} = \min(\mu |\boldsymbol F^{n} |, |\boldsymbol F^{t\ast} |){\boldsymbol F^{t\ast} /|\boldsymbol F^{t\ast} | }\, ,
\end{equation}
with
\begin{equation}
\boldsymbol F^{t\ast} = -k_t \boldsymbol \xi^{} - \eta_t  m\boldsymbol v_{i,j}^{t}\, .
\end{equation}
Viscous damping in the tangential direction is included in the model via the damping coefficient $\eta_t = \eta_n$.
The value of the normal spring constant is $k_n=4\times 10^3$ and parameters $\eta_n$  and $k_t$ are set to $\eta_n = 1.4$ (consistent with the specified value of $e=0.5$)  and $k_t=0.8k_n$. Friction coefficient is set to $\mu=0.7$.

In the simulations the particles are  placed on a base that moves with the speed that 
varies linearly from $0$ to the speed of the top wall.  The purpose of the base is to ensure linear velocity profile across the domain.  There is a dissipative effect from friction between the particles and the base; for the $i$-th particle
\begin{equation}
 - \mu_b|\boldsymbol g|{\boldsymbol a_i(t)\over |\boldsymbol a_i(t)|}
\end{equation}
where $\mu_b$ is the friction between particle and base and $\boldsymbol a_i$ is the acceleration of the $i$-th particle. The magnitude and direction of $\boldsymbol a_i$ is determined here from the interaction of the particle $i$ with all particles in contact
\begin{equation}
 \boldsymbol a_i = {1\over m_i}\sum_{c_i} \boldsymbol F_{i,c_i}
\end{equation}
where $c_i$ runs over all particles in contact with particle $i$.

In our simulations we integrate Newton's equations of motion for both the translational and rotational degrees of freedom using a $4$th order predictor-corrector method with time step $\Delta t =0.02$. Initially, particles are placed on a grid and given random initial velocity; particles are bidisperse with the ratio of the large to small particle diameter $1.4$. Approximately $1/3$ of the particles have large and $2/3$ of the particles have small diameter. There is $\approx 1200$ particles and the rectangular domain is $54$ particle diameters wide and $23$ particle diameters high (in terms of $d_{ave}$).

Our simulations start  by slowly compressing the domain with a specified pressure, $P = \rm k^2P_0$, applied on the top wall, until the top wall reaches a steady position. The system is then sheared with velocity $v =  kv_0$, where $v_{0}= 2.5\cdot 10^{-5}$ (in the units of $d_{ave}/\tau_c$). Walls are built of monodisperse particles with diameters of size $d_{ave}$ placed initially at equal distances, $d_{ave}$, from each other; in the horizontal direction the boundary conditions are periodic. The value of $P_0$ is found by compression of the top wall up to a packing fraction $\rho=0.80$ ensuring a dense packing above jamming point. We show results for different values of $k\in[1.0,3.0,4.0,\sqrt{20},5]$; note that the ratio of the applied pressure and shearing velocity guarantees fixed inertial number (as discussed in the main body of the paper). Specifically, the shear rate, $\dot{\gamma} = H/v$,
where $H$ denotes the height of the granular system and inertial number $I=\dot{\gamma}\sqrt{m/P}\approx 10^{-4}$.

\subsection{Forces and Effective Friction}

Interaction networks of the granular systems with the fixed inertial number, $I$, are expected to evolve in the same manner. In
particular, the velocity profile during shear should stay the same regardless of the pressure or shearing velocity applied and the measures such as probability distribution function of forces, $\rm PDF$, and effective friction, $\mu^*$ found as a ratio of shear and normal stress, attain the same value~\cite{cruz_2005}.

We show that, indeed, these global measures confirm the previous findings and that our conclusions do not depend on
the subtle differences such as the existence of the frictional base. Specifically, the ratio of the typical interparticle force, $\langle F \rangle$ and the force between the particle and frictional base, $F_b$, is $\langle F \rangle/F_b < 10^{-2}$. We also find that the ratio of the kinetic and elastic energy in each
system considered is $ \ll 1$, showing that the force dynamics is not influenced by the existence of the frictional base in the dense flow regime, characterized by $I\leq O(10^{-3})$~\cite{cruz_2005}. Figure~\ref{fig:pdf} shows the probability distribution functions of forces, $\rm PDF$, averaged over complete simulation. As expected in the case of a fixed inertial number, the $\rm PDF$ curves collapse for all the systems considered.

\begin{figure}[ht!]
  \centering
  \includegraphics{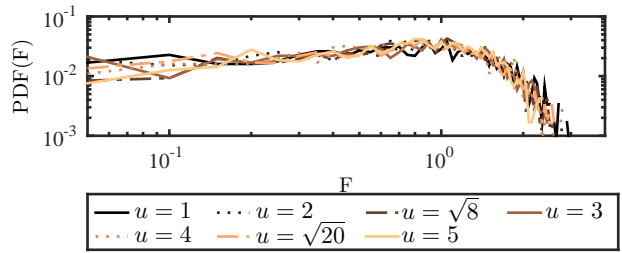}
  \caption{Probability distribution function of forces, $\rm PDF$, for all systems considered.}\label{fig:pdf}
\end{figure}

Next we discuss the effective friction, $\mu^*$. It has been shown~\cite{cruz_2005} that the value of $\mu^* < \mu$ for dense flows and that it remains constant for a large range of inertial numbers, $I\leq (10^{-3})$. Since in our case $I\approx 10^{-4}$ we expect to see similar results (i.e. $\mu^* < \mu$). Figure~\ref{effective_friction} shows $\mu^*$ for different systems as a function of the wall displacement, $d$ (in the units of $d_{ave}$). We observe $\mu^* \approx 0.45$ which is smaller than the interparticle friction $\mu=0.7$. Moreover, the value of $\mu^*$ fluctuates around the same value regardless of the pressure applied.

\begin{figure}[ht!]
 \centering
 \includegraphics{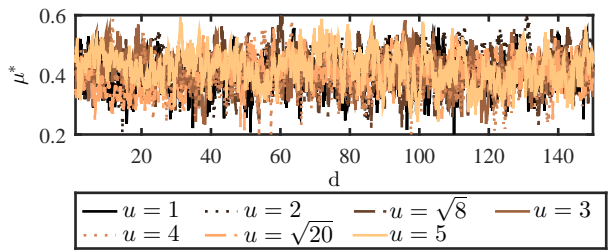}
 \caption{Effective friction, $\mu^*$ for different systems with fixed inertial number.}\label{effective_friction}
\end{figure}

We point out the fact that all the considered measures yield the same results for all the systems that we consider. In other words, the conclusions found in~\cite{cruz_2005} are confirmed here as well. However, our findings strongly suggest that not all essential information can be gathered by means of global measures and one needs to consider the topology of interaction networks as well.

\subsection{Strongly mixing systems}
\label{sec:strongly_mixing_systems}

In this section we consider a dynamical system  generated by  $f \colon X \to X$, where $X \subset \R^M$ is compact. We suppose that the system has a unique invariant measure $\mu$, and its dynamics is strongly mixing with respect to this  measure. We also require that the measure $\mu$ has a continuous Radon-Nikodym derivative with respect to the Lebesgues measure $dx$, i.e. there exists a continuous function $g \colon X \to X$ such  $\mu(A) = \int_A g dx$  for every measurable set $A \subseteq X$. Under these assumptions we get the following upper bound on the measure of the intersection of an $\varepsilon$ neighborhood of a sphere and subsets of $X$. 

\begin{lemma} Let $X\subset \R^M$ be a compact set and $\bar{\varepsilon}>0$.  For $x\in X$, $s\in \R$ and $\varepsilon >0$ we define 
\begin{equation}
\label{def::annulus}
\mathcal{A}(x, s,\varepsilon) := \setof{y \in X \colon s-\varepsilon \leq d(x,y) < s + \varepsilon},
\end{equation}
where $d$ is the Euclidean distance. If $\mu$ is a measure on $X$ whose Radon-Nikodym derivative is continuous  with respect to the Lebesgues measure on $\R^M$, then there exists a constant $K$ such that 
\begin{equation}
\label{def::annulus_measure}
\mu(\mathcal{A}( x, s,\varepsilon)) \leq K \varepsilon.
\end{equation}
for every $\varepsilon < \bar{\varepsilon}$

\label{lem::annulus_measure}
\end{lemma}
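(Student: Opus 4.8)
The plan is to dominate $\mu$ by Lebesgue measure and then estimate the volume of a thin spherical shell. Since the Radon-Nikodym derivative $g$ is continuous on the compact set $X$, it attains a finite maximum; write $G := \max_{y \in X} g(y)$. Denoting by $\lambda$ the Lebesgue measure on $\R^M$, for any measurable $A \subseteq X$ we have $\mu(A) = \int_A g\, dx \leq G\, \lambda(A)$. Thus it suffices to produce a constant $C$, independent of $x$ and $s$, with $\lambda(\mathcal{A}(x,s,\varepsilon)) \leq C\varepsilon$ for all $\varepsilon < \bar{\varepsilon}$; the lemma then follows with $K := GC$.

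To estimate $\lambda(\mathcal{A}(x,s,\varepsilon))$, I would observe that $\mathcal{A}(x,s,\varepsilon)$ is contained in the spherical shell $\setof{y \in \R^M \colon s-\varepsilon \leq |y-x| < s+\varepsilon}$ centered at $x$ (valid for $s \geq \varepsilon$; when $s < \varepsilon$ the lower constraint is vacuous since $|y-x| \geq 0$, and the shell degenerates to a ball of radius $s+\varepsilon$). Writing $\omega_M$ for the volume of the unit ball in $\R^M$, the shell has Lebesgue measure $\omega_M[(s+\varepsilon)^M - (s-\varepsilon)^M]$, and the mean value theorem applied to $t \mapsto t^M$ gives $(s+\varepsilon)^M - (s-\varepsilon)^M = 2M\,\xi^{M-1}\varepsilon$ for some $\xi \in (s-\varepsilon, s+\varepsilon)$. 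Hence the shell volume is linear in $\varepsilon$, with a coefficient controlled by $\xi^{M-1}$.

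The one genuine obstacle is \emph{uniformity} of this coefficient in $s$: the factor $\xi^{M-1}$ grows without bound as $s \to \infty$, so the naive estimate is not of the form $C\varepsilon$ with $C$ independent of $s$. This is resolved by a geometric observation. Let $D := \mathrm{diam}(X)$, which is finite by compactness. If $s > D + \bar{\varepsilon}$ then every $y \in X$ satisfies $|y-x| \leq D < s-\varepsilon$, so $\mathcal{A}(x,s,\varepsilon) = \emptyset$ and the bound holds trivially. Hence only $0 \leq s \leq D + \bar{\varepsilon}$ matters, and there $\xi < s+\varepsilon \leq D + 2\bar{\varepsilon}$, giving $\xi^{M-1} \leq (D + 2\bar{\varepsilon})^{M-1}$; this caps the coefficient uniformly and yields $C = 2M\omega_M(D+2\bar{\varepsilon})^{M-1}$. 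The remaining edge cases are routine: for $s < \varepsilon$ the shell is a ball of radius $s+\varepsilon < 2\varepsilon$, of volume at most $\omega_M (2\varepsilon)^M \leq \omega_M 2^M \bar{\varepsilon}^{M-1}\varepsilon$, and for $s < 0$ the set is empty or such a small ball. Collecting the constants produces the asserted $K$.
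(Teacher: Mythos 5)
Your proof is correct and follows essentially the same route as the paper's: dominate $\mu$ by $(\max_X g)\,\lambda$ using continuity of the Radon--Nikodym derivative on the compact set, then bound the Lebesgue measure of the spherical shell by the difference of ball volumes, which is linear in $\varepsilon$ with a constant made uniform by compactness. The only differences are cosmetic — you use the mean value theorem where the paper uses a binomial expansion, and you spell out explicitly (via $\mathrm{diam}(X)$ and the empty-annulus observation) the uniformity in $s$ that the paper handles implicitly by choosing the enclosing radius $R$ ``sufficiently large.''
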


\begin{proof}
The compact set $X$ is contained in some ball with a sufficiently large radius $R$. Thus, volume of every set  $\mathcal{A}( x, s,\varepsilon)$ is smaller than the volume of an $\varepsilon$ neighborhood of an $M$-dimensional sphere with radius $R$ which is given by
\[
V(R,\varepsilon) = K_M((R + \varepsilon)^M - (R - \varepsilon)^M) = 2K_M\varepsilon\left( {N \choose 1}R^{N-1} +  {N \choose 3}R^{N-3}\varepsilon^2 + \ldots + \varepsilon^{N-1} \right),
\]
where $K_m = \frac{\pi^{\frac{M}{2}}}{\Gamma(\frac{M}{2}+1)}$. This implies that, for $\varepsilon < \bar{\varepsilon}$, the Lebesgues measure  of the set $\mathcal{A}( x, s,\varepsilon)$ is bounded by $K'\varepsilon$ where  $K' = 2K_M\left( {N \choose 1}R^{N-1} +  {N \choose 3}R^{N-3}\bar{\varepsilon}^2 + \ldots + \bar{\varepsilon}^{N-1} \right)$. Finally, by  continuity of the Radon-Nikodym derivative $g$ of the measure $\mu$ we get that $\mu(\mathcal{A}( x, s,\varepsilon)) \leq K\varepsilon$ for $K = K'\max_{x\in X}g(x)$.
\end{proof}

We use this lemma to prove convergence of CDFs, $F_m$, of the random variable $d(x,f^m(x))$ with $x$ distributed according to $\mu$. The main theorem of our Letter (restated below) guarantees that the mixing rate of the system provides an upper bound on $|| F_m - F ||_\infty$ where $F$ is a cumulative distribution of the random variable  $d(x,y)$ with $x,y$ independently distributed  according to $\mu$. Thus, the theorem implies that for strongly mixing systems  $|| F_m - F ||_\infty \to 0$ as $m \to \infty$.

\begin{theorem}
\label{thm::convergence_rate}
 Let $X\subset \R^M$ be compact and suppose that $f\colon X\to X$ has a unique invariant measure, $\mu$, whose Radon-Nikodym derivative is continuous  with respect to the Lebesgues measure on $\R^M$.  Let $\setof{\varepsilon_m}_{m=1}^{\infty}$ be a sequence of positive numbers converging to zero. If there exits a sequence of partitions $\setof{\mathcal{T}_m}_{m=1}^{\infty}$ of $X$ such that diameter of every set $T \in \mathcal{T}_m$ is less than $\varepsilon_m$ and for every measurable set $E \subset X$
\begin{equation}
|\mu(E)\mu(T) - \mu(f^{-m}(E) \cap T) | <  \varepsilon_m \mu(T),
\label{eqn::mixing}
\end{equation}
then there exits a constant $C > 0$ such that
\begin{equation}
\label{eqn::convergence}
|| F_m - F ||_\infty < C\varepsilon_m.
\end{equation}
\end{theorem}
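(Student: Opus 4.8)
The plan is to write both distribution functions as integrals against $\mu$ and then use the partition $\mathcal{T}_m$ to localize the comparison. Setting $B(x,s) := \setof{y \in X \colon d(x,y) \leq s}$, Fubini's theorem gives the clean identity $F(s) = \int_X \mu(B(x,s))\,\du\mu(x)$, since $F(s) = (\mu\times\mu)(\setof{(x,y)\colon d(x,y)\le s})$. For the other function, $d(x,f^m(x)) \leq s$ is exactly the condition $x \in f^{-m}(B(x,s))$, so $F_m(s) = \sum_{T\in\mathcal{T}_m} \mu\!\left(\setof{x \in T \colon f^m(x) \in B(x,s)}\right)$. First I would fix, for each $T \in \mathcal{T}_m$, a representative point $x_T \in T$; the point of this is to replace the moving target $B(x,s)$ by the \emph{fixed} ball $B(x_T,s)$, so that the mixing hypothesis~(\ref{eqn::mixing}), which requires a set $E$ independent of $x$, becomes applicable.

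Second, I would exploit that $\mathrm{diam}(T) < \varepsilon_m$. For $x \in T$ the triangle inequality yields the sandwich $B(x_T, s-\varepsilon_m) \subseteq B(x,s) \subseteq B(x_T, s+\varepsilon_m)$, so the $T$-summand lies between $\mu(f^{-m}(B(x_T,s-\varepsilon_m))\cap T)$ and $\mu(f^{-m}(B(x_T,s+\varepsilon_m))\cap T)$. Applying~(\ref{eqn::mixing}) with $E = B(x_T, s\pm\varepsilon_m)$ replaces each of these by $\mu(B(x_T,s\pm\varepsilon_m))\,\mu(T)$ up to an error $\varepsilon_m\mu(T)$. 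Summing over $T$ and using $\sum_T \mu(T) = 1$ collapses the total error to a single $\varepsilon_m$. A second use of the triangle inequality (comparing $x_T$ back to a general $x \in T$) gives $B(x_T, s+\varepsilon_m)\subseteq B(x, s+2\varepsilon_m)$ and $B(x,s-2\varepsilon_m)\subseteq B(x_T,s-\varepsilon_m)$, so after re-integrating over each $T$ the two sums are squeezed between $F(s-2\varepsilon_m)$ and $F(s+2\varepsilon_m)$, leaving
\[
F(s-2\varepsilon_m) - \varepsilon_m \;\leq\; F_m(s) \;\leq\; F(s+2\varepsilon_m) + \varepsilon_m.
\]

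The final step, and the one where the regularity of $\mu$ is essential, is to show that shifting the radius by $2\varepsilon_m$ perturbs $F$ by at most $O(\varepsilon_m)$. Here I would write $F(s+2\varepsilon_m)-F(s) = \int_X \mu(\setof{y\colon s < d(x,y) \le s+2\varepsilon_m})\,\du\mu(x)$, observe that the inner set sits inside an annulus $\mathcal{A}(x, s+\varepsilon_m, \varepsilon_m)$, and invoke Lemma~\ref{lem::annulus_measure} to bound the integrand uniformly by $K\varepsilon_m$; since $\mu(X)=1$ this gives $F(s+2\varepsilon_m)-F(s) \le K\varepsilon_m$, and symmetrically for the downward shift. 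Combining with the displayed squeeze yields $|F_m(s)-F(s)| \le (K+1)\varepsilon_m$ uniformly in $s$, which is~(\ref{eqn::convergence}) with $C = K+1$. The hard part will be exactly this coupling of the two approximations: the mixing hypothesis forces the ball center to be frozen at $x_T$, while controlling the error that freezing introduces needs both the small diameter of the partition cells and the absolute continuity of $\mu$ via the annulus estimate. Without the continuous Radon--Nikodym derivative, $F$ could jump and the $O(\varepsilon_m)$ modulus of continuity would break down. One minor technical point is that Lemma~\ref{lem::annulus_measure} only furnishes the constant $K$ for $\varepsilon_m < \bar{\varepsilon}$, but since $\|F_m - F\|_\infty \le 1$ trivially for every $m$, enlarging $C$ absorbs the finitely many small indices.
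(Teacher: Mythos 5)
Your proof is correct, and it rests on the same three ingredients as the paper's own argument: a representative point $x_T$ frozen in each partition cell, the mixing hypothesis~(\ref{eqn::mixing}) applied to balls centered at $x_T$, and Lemma~\ref{lem::annulus_measure} to control the error that freezing introduces. The bookkeeping, however, is genuinely different. The paper splits $|F_m(s)-F(s)|$ into three sums by the triangle inequality and bounds each one via symmetric differences; in particular, its first sum requires applying~(\ref{eqn::mixing}) to the annulus $\mathcal{A}(x_T,s,\varepsilon_m)$ itself, so the annulus estimate must be transported through $f^{-m}$ by the mixing hypothesis. Your sandwich argument avoids this: mixing is only ever applied to the balls $B(x_T,s\pm\varepsilon_m)$, and the annulus lemma is used purely on the static side, as a modulus of continuity for $F$. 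This yields the slightly sharper constant $C=K+1$ versus the paper's $C=2(K+1)$ (irrelevant in itself, but a sign the bracketing is tighter), and it isolates cleanly where absolute continuity of $\mu$ enters --- exactly as you say, it is what keeps $F$ from jumping. You also patch a point the paper leaves implicit: Lemma~\ref{lem::annulus_measure} only furnishes $K$ for $\varepsilon<\bar{\varepsilon}$, and your trivial-bound argument for the finitely many indices with $\varepsilon_m\geq\bar{\varepsilon}$ closes that gap. One negligible remark: with your closed-ball convention the set $B(x,s+2\varepsilon_m)\setminus B(x,s)$ may contain the sphere $\setof{y\colon d(x,y)=s+2\varepsilon_m}$, which lies outside $\mathcal{A}(x,s+\varepsilon_m,\varepsilon_m)$; this sphere is $\mu$-null because $\mu$ has a continuous density, or one can simply adopt the paper's strict-inequality convention, under which your inclusion is exact.
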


\begin{proof}
It follows from additive property of the measure that the CDFs  $F_m$ and $F$ can be calculated as:
\begin{equation}
\label{eqn::Fn}
F_m(s) =  \sum_{T\in \mathcal{T}_m} \mu(\setof{x \in T \colon  d(x,f^m(x )) < s  })
\end{equation}
and
\begin{equation}
\label{eqn::F}
F(s) =  \sum_{T\in \mathcal{T}_m} \mu \times \mu (\setof{(x,y)\in T\times X \colon  d(x,y ) < s  }),
\end{equation}
where $ \mu \times \mu$ is the product measure. For every set $T \in \mathcal{T}_m$ we choose a random point $x_T \in T$ and define $B(x_T,s) := \setof{x \in X \colon d(x_T,x) < s}$. The following inequality follows from Equations~(\ref{eqn::Fn})~,~(\ref{eqn::F}) and the triangle  inequality:
\begin{align}
|F_m(s) - F(s) | & \leq  \sum_{T\in \mathcal{T}_m} \left| \mu(\setof{x\in T \colon  d(x,f^m(x )) < s }) -\mu(\setof{x\in T \colon  d(x_T,f^m(x ) ) < s  }) \right|  +  \nonumber \\
  & +   \sum_{T\in \mathcal{T}_m} \left| \mu(\setof{x\in T \colon  d(x_T,f^m(x )) < s  }) - \mu(B(x_T,s))\mu(T) \right|  \nonumber \\
  & +   \sum_{T\in \mathcal{T}_m} \left|  \mu(B(x_T,s))\mu(T)  - \mu \times \mu(\setof{(x,y)\in T \times X \colon  d(x,y) < s  }) \right|.  \nonumber \\
\label{eq:1}
\end{align}
To show that Inequality (\ref{eqn::convergence}) holds we need to properly bound the individual terms in the above inequality.

We start by estimating the terms in the first sum. 
We denote the symmetric difference of the sets $A$ and $B$ by $A\Delta B$ and recall that  $|\mu(A) - \mu(B)| \leq\mu(A\Delta B)$ for all measurable sets $A$ and $B$. Now, we estimate the symmetric difference of the sets
\begin{align}
 \setof{x\in T \colon  d(x,f^m(x )) < s } \Delta \setof{x\in T \colon  d(x_T,f^m(x ) ) < s  } &\subseteq  \setof{x\in T \colon f^m(x) \in \mathcal{A}(x_T,s, \varepsilon_m)} =  \nonumber \\
&= f^{-m}(\mathcal{A}(x_T,s,\varepsilon_m)) \cap T.  \nonumber 
\label{eq:1}
\end{align}
The inclusion follows from the fact that $B(x_T,s-\varepsilon_m ) \subset B(x,s) \subset B(x_T,s+\varepsilon_m)$ for all $x \in T$. It follows from  (\ref{eqn::mixing}) that
\[
| \mu((\mathcal{A}(x_T,s,\varepsilon_m))\mu(T) -\mu( T \cap f^{-m}(\mathcal{A}(x_T,s,\varepsilon_m)) )| \leq  \varepsilon_m\mu(T).
\]
By applying Lemma~\ref{lem::annulus_measure} we get
\[
\mu( T \cap f^{-m}(\mathcal{A}(x_T,s,\varepsilon_m)) ) \leq \varepsilon_m K\mu(T) + \varepsilon_m\mu(T).
\]
Therefore, the first sum in (\ref{eq:1}) is bounded by $\varepsilon_m(K+1)$.

Now, we turn our attention to the second sum . Note that 
\[
 \setof{x\in T \colon  d(x_T,f^m(x )) < s  } =  f^{-m}(B(x_T,s))\cap T.
\]
It follows from (\ref{eqn::mixing}) that $\left| \mu(f^{-m}(B(x_T,s))\cap T) - \mu(B(x_T,s))\mu(T) \right| < \varepsilon_m \mu(T)$
and the second sum is bounded by $\varepsilon_m$. Finally, we consider the last sum. By definition  $\mu(B(x_T,s))\mu(T) = \mu\times\mu( T \times B(x_T,s))$ and to bound the last sum by $K\varepsilon_m$ we just need to show that
\begin{equation}
    \left|  \mu\times\mu( T \times B(x_T,s))  - \mu \times \mu(\setof{(x,y)\in T \times X \colon  d(x,y) < s  }) \right| \leq K\varepsilon\mu(T)
    \label{eqn::last_sum}
\end{equation}
for every $T \in \mathcal{T}_m.$ By similar reasoning as above 
\[
\setof{T \times B(x_T,s)} \Delta\setof{(x,y)\in T \times X \colon  d(x,y) < s  } \subset T\times \mathcal{A}(x_T,s, \varepsilon_M)
\]
and  Inequality~(\ref{eqn::last_sum}) follows from the fact that $\mu(T\times \mathcal{A}(x_T,s, \varepsilon_M)) < \mu(T) K \varepsilon_m.$

By combining the estimates for individual sums we get that
\[
|F_m(s) - F(s) |  \leq  2(K+1)\varepsilon_m.
\]
for all $s \in \R$. Hence Inequality~(\ref{eqn::convergence}) holds for $C = 2(K+1)$.

\end{proof}

In practice we can only approximate the CDFs, $F_m$, from a finite sample $\setof{x_i}_{i=0}^N$  of a trajectory starting from some initial condition $x_0$. The approximation based on this sample is defined by
\[
F^N_m(s) = \frac{1}{N-m}\sum_{i=0}^{N-m} \chi_s( d(x_i,x_{i+m}) ).
\]
where $\chi_s$ is the characteristic function of the set $\setof{x \in \R \colon x < s}$.  We close this section by proving,  that if $f$ is a Borel measurable function, then $\lim_{M\to \infty} F_m^N(s) = F_m(s)$,  for every $s\in \R$ and $m \in \N$. By definition 
\[
F_m(s) :=  \int_{x\in X}  \chi_s(  d(x,f^m(x )) )d\mu.
\]
Hence to  prove that $F_m^N$ converges to  $F_m$  point-wise we just need to show that 
\[
\lim_{N \to \infty} \frac{1}{N-m}\sum_{i=0}^{N-m} \chi_s( d(x_i,x_{i+m}) ) = \int_{x\in X}  \chi_s(  d(x,f^m(x )) )d\mu,
\]
for $s \in \R$. This follows from Birkhoff ergodic theorem for almost every  $x_0 \in X$ under the assumption that $\chi_s(  d(x,f^m(x )) )$ is a Borel measurable function. If $f$ is  Borel measurable, then the map $x \to (x,f(x))$ is Borel measurable. Moreover, the distance function $d$ is Borel measurable with respect to the product algebra and so the functions $\chi_s$ are  Borel measurable  as well.

\subsection{Tent map} 
\label{sec:tent_map}

In this section we consider the dynamical system generated by the ten map
\[
f(x) := 2\min\setof{x, 1-x} \colon [0,1]\to [0,1]. 
\]
The invariant measure of this system coincides with the Lebesgues measure and we can use Theorem~\ref{thm::convergence_rate} to estimate the convergence rate of $||F_m - F||_\infty$. Lets us consider a family of partitions $\setof{\mathcal{T}_m}$  of $[0,1]$ where $\mathcal{T}_m$ consists of the intervals $T_m^i = [i2^{-m},(i+1)2^{-m}]$ for $i \in \setof{0, 1, \ldots 2^m-1}$. The radius of each interval $T_m^i$ is equal to $2^{-m}$. Moreover, every interval $T_m^i$ is uniformly starched to $[0,1]$ by $f^m$. This implies that for a fixed $m$ and every measurable  $E \subset [0,1]$ the measure of $\mu(f^{-m}(E) \cap T_m^i)$ is the same for all $i$. Because $f$ preserves $\mu$ it has to be equal to $\mu(E)2^{-m}$ which is exactly $\mu(E)\mu(T_i)$ and so $|\mu(f^{-m}(E) \cap T_m^i) -  \mu(E)\mu(T_i)| = 0$. Now it follows from Theorem~\ref{thm::convergence_rate} that 
\begin{equation}
\label{eqn::tent_convergence_rate}
||F_m - F||_\infty \leq C2^{-m}
\end{equation}
which is the inequality presented in our Letter.

The fact that the invariant measure $\mu$ coincides with Lebesgue measure on $[0,1]$ makes it easy to calculate the limiting distribution 
\[
F(s) =\left\{
     \begin{array}{ll}
        	0 \text{ if\;} s < 0 ,\\
         2s - s^2 \text{ if\;} 0 \leq s \leq 1 ,\\
    	 1 \text{ if\;}  s > 1.
     \end{array}
   \right.\\
\]
This allows us to study the convergences rate $||F_m^N - F||_\infty$ of the approximate distributions obtained from a finite sample $\setof{x_i}_{i = 0}^N$, $N = 5\times 10^6$,  of a trajectory  starting from  $x_0 =0.1$. Due to the binary floating-point representation of the numbers in a computer, the direct iteration of the tent map does not produce a good sample of the trajectory.  Instead we  use the fact that the tent map is homeomorhpic to the logistic map with parameter $r=4$. If we  denote the logistically evolving variable by $y_n$, then the iterations of the tent map are given by $x_{n}={\tfrac  {2}{\pi }}\sin ^{{-1}}(y_{{n}}^{{1/2}})$. We showed in this Letter that the bound on the convergence rate given by (\ref{eqn::tent_convergence_rate}) does not hold for $||F_m^N - F||_\infty$ if $m > 5$. This is due to the fact that the sample is not large enough and $F_m^N$ does not approximate $F_m$ with necessary  accuracy required to observe the predicted decay rate.

\subsection{Logistic map} 
\label{sec:logistic_map}

\begin{figure*}
\includegraphics{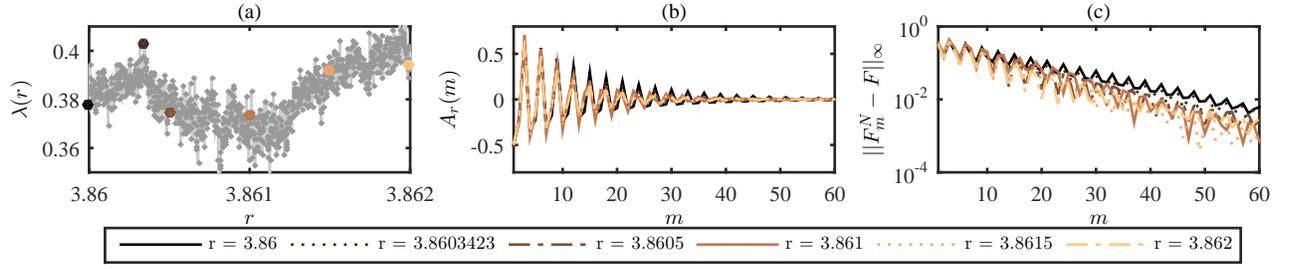}
\caption{(a) Lyapunov exponents $\lambda(r)$ for logistic map for different values the parameter $r$. Lyapunov exponents for the values of $r$ considered in (b) and (c) are emphasized. (b) Autocorrelation $A_r$ of the states for the logistic map for different values of $r$.  (c) Decay of $|| F^N_m - F ||_\infty$  is faster for the systems  exhibiting faster decay of correlations.} 
\label{fig::LogisticMap}
\end{figure*}

Our main theorem shows that convergence rate of $||F_m^M - F||_\infty$ can be used to infer the mixing rate of the system. The mixing rate is known to influence decay of correlations between the states of the system~\cite{wiggins2004foundations}.    For the logistic map we demonstrate that convergence rate of $||F_m^M - F||_\infty$ is closely related to the decay of correlations and  thus well suited for measuring the memory loss of the system.

The logistic map 
\[
f_r(x) := rx(1-x)  \colon [0,1] \to [0,1]
\]
is a classical example showing that complex chaotic behaviour can be produced by very simple non-linear systems. Dependence of the dynamics on the parameter $r$ is very complicated  and exhibits a large number of different bifurcations. This is documented by erratic behaviour of the Lyapunov exponents, shown in Fig.~\ref{fig::LogisticMap}(a), which measure the complexity of the dynamics. However, this measure is rather local and is not directly connected to the mixing rate.

Figure~\ref{fig::LogisticMap}(b) depicts the decay of correlations for different values of $r$. Generally, the decay rate increases with $r$ and does not reflect the erratic behavior of the Lyapunov exponents. The decay of $||F_m^M - F||_\infty$ is shown in Fig.~\ref{fig::LogisticMap}(c). As expected the systems with faster decay of correlations are mixing faster and thus $||F_m^M - F||_\infty$ also decays faster for these systems. This suggest that our measure  is indeed well suited for assessing the rate of memory loss.

\end{document}